\begin{document}
{\renewcommand{\thefootnote}{\fnsymbol{footnote}}
\begin{center}
{\LARGE  Faithful realizations of semiclassical truncations}\\
\vspace{1.5em}
Bekir Bayta\c{s},\footnote{e-mail address: {\tt bub188@psu.edu}}
Martin Bojowald\footnote{e-mail address: {\tt bojowald@gravity.psu.edu}}
and
Sean Crowe\footnote{e-mail address: {\tt stc151@psu.edu}}
\\
\vspace{0.5em}
Institute for Gravitation and the Cosmos,\\
The Pennsylvania State
University,\\
104 Davey Lab, University Park, PA 16802, USA\\
\vspace{1.5em}
\end{center}
}

\setcounter{footnote}{0}

\newtheorem{theo}{Theorem}
\newtheorem{lemma}{Lemma}
\newtheorem{defi}{Definition}

\newcommand{\proofend}{\raisebox{1.3mm}{\fbox{\begin{minipage}[b][0cm][b]{0cm}
\end{minipage}}}}
\newenvironment{proof}{\noindent{\it Proof:} }{\mbox{}\hfill \proofend\\\mbox{}}
\newenvironment{ex}{\noindent{\it Example:} }{\medskip}
\newenvironment{rem}{\noindent{\it Remark:} }{\medskip}

\newcommand{\case}[2]{{\textstyle \frac{#1}{#2}}}
\newcommand{\lP}{\ell_{\mathrm P}}

\newcommand{\md}{{\mathrm{d}}}
\newcommand{\tr}{\mathop{\mathrm{tr}}}
\newcommand{\sgn}{\mathop{\mathrm{sgn}}}

\newcommand*{\R}{{\mathbb R}}
\newcommand*{\N}{{\mathbb N}}
\newcommand*{\Z}{{\mathbb Z}}
\newcommand*{\Q}{{\mathbb Q}}
\newcommand*{\C}{{\mathbb C}}

\begin{abstract}
  Realizations of algebras in terms of canonical or bosonic variables can
  often be used to simplify calculations and to exhibit underlying
  properties. There is a long history of using such methods in order to study
  symmetry groups related to collective motion, for instance in nuclear shell
  models. Here, related questions are addressed for algebras obtained by
  turning the quantum commutator into a Poisson bracket on moments of a
  quantum state, truncated to a given order. In this application, canonical
  realizations allow one to express the quantum back-reaction of moments on
  basic expectation values by means of effective potentials. In order to match
  degrees of freedom, faithfulness of the realization is important, which
  requires that, at least locally, the space of moments as a Poisson manifold
  is realized by a complete set of Casimir--Darboux coordinates in local
  charts. A systematic method to derive such variables is presented and
  applied to certain sets of moments which are important for physical
  applications. If only second-order moments are considered, their
  Poisson-bracket relations are isomorphic to the Lie bracket of ${\rm
    sp}(2N,{\mathbb R})$, providing an interesting link with realizations of
  nuclear shell models.
\end{abstract}

\section{Introduction}

Semiclassical truncations approximate quantum dynamics by dynamical systems in
which expectation values are coupled to moments of a state. The classical
phase space is thereby extended to an enlarged manifold with a Poisson bracket
of expectation values and moments derived from the commutator of basic
operators. These canonical effective methods have been used in various
contexts, such as quantum chemistry \cite{QHDTunneling} and quantum cosmology
\cite{ROPP}, and they reproduce well-known results including tunneling
phenomena \cite{QHDHigher}, the low-energy effective action
\cite{EffAc,Karpacz}, or the Coleman--Weinberg potential \cite{CW}. However,
the enlargement of the classical phase space tends to complicate qualitative
interpretations as well as computations, in particular because moments, unlike
expectation values, do not form canonically conjugate pairs. In this paper, we
therefore analyze the problem of constructing canonical realizations of
Poisson systems, or their Casimir--Darboux coordinates. To second moment order
for a single pair of classical degrees of freedom, an interesting canonical
realization has been known for some time \cite{GaussianDyn,QHDTunneling}. Our
main goal is to extend these results to multiple degrees of freedom and to
higher orders in a semiclassical expansion.

At leading order, semiclassical truncations turn out to be closely related to
the Lie algebras ${\rm sp}(2N,{\mathbb R})$. Our methods and examples can
therefore be extended directly to finding canonical realizations for these
algebras. Moreover, once a canonical realization is found, one automatically
obtains a bosonic realization using the standard Poisson structure on the
complex numbers. (Canonical pairs are thereby replaced by classical analogs of
annihilation and creation operators.)

We put special emphasis on the construction of faithful realizations, in which
the number of independent variables is equal to the dimension of the original
system, and the co-rank of the Poisson tensor agrees with the number of
Casimir functions. Canonical and bosonic realizations of systems of the type
studied here have been used for several decades, but achieving faithfulness
often presented a problem. Bosonic realizations go back to theoretical work on
magnetic systems \cite{HP}. Interest in particular in bosonic realizations of
${\rm sp}(6,{\mathbb R})$ grew after the introduction of a symplectic model of
nuclear shells and vibrations \cite{Nuclearsp6}. Non-faithful bosonic
realizations have been used in several papers mainly to compute matrix
elements in irreducible representations
\cite{AlgebraicCollective,DynCollective,CoherentSymp,Bosonsp4,BosSymp}. Some
of these studies noted difficulties in finding faithful realizations, starting
with ${\rm sp}(4,{\mathbb R})$ \cite{Bosonsp4,BosSymp}.  Bosonic and canonical
realizations of Lie algebras other than ${\rm sp}(2N,{\mathbb R})$ have been
analyzed and formalized in \cite{Mukunda1,Mukunda2,Mukunda3,Rosen,Subreps},
which in most cases were not faithful.

Our results lead to an extension of some of the results of \cite{Bosonsp4} to
a faithful bosonic realization, but we expect the main applications of our
methods to be in semiclassical discussions of quantum mechanics. Even though
we address quantum systems, the use of semiclassical truncations means that we
are interested here in {\em classical} realizations of a system with Poisson
brackets. We do not consider the more complicated question of constructing
bosonic realizations of operator algebras --- the main topic of
\cite{Bosonsp4} --- in which factor ordering questions are relevant.

\section{Canonical Effective Methods}

Canonical effective equations \cite{EffAc,Karpacz} describe quantum effects
through interactions between expectation values and moments of a state with
respect to a fixed set of basic observables. The commutator of operators
induces a Poisson bracket on the space of expectation values and moments,
leading to an infinite-dimensional extension of the classical phase space. In
semiclassical approximations of varying degrees, finite-dimensional
truncations are used for each canonical pair. The Hamiltonian operator then
implies an effective Hamiltonian on the extended phase space for each of its
finite-dimensinal truncations, and quantum dynamics can be analyzed much like
a classical dynamical system.  Mathematically, canonical effective methods
replace partial differential equations for wave functions by a system of
coupled ordinary differential equations for an enlarged set of variables

We assume that the unital $*$-algebra ${\cal A}$ of observables defining the
quantum system is canonical, that is, generated by the unit operator together
with a finite set of self-adjoint position and momentum operators $Q_j$
and $\Pi_k$, $1\leq j,k\leq N$, with canonical commutation relations
\begin{equation}
 [Q_j,\Pi_k] = i\hbar \delta_{jk}\,.
\end{equation}
States are positive linear functionals $\omega$ from the algebra to the
complex numbers, such that $\omega(a^*a)\geq 0$ for all $a\in {\cal A}$
\cite{LocalQuant}. They may (but need not) be obtained from wave functions or
density matrices in or acting on a Hilbert space ${\cal H}$ on which ${\cal
  A}$ may be represented by $a\mapsto \hat{a}$: In such a case, every
$\psi\in{\cal H}$ defines a state $\omega_{\psi}\colon a\mapsto
\langle\hat{a}\rangle_{\psi}$, and every density matrix $\hat{\rho}$ defines a
state $\omega_{\rho}\colon a\mapsto {\rm tr}(\hat{a}\hat{\rho})$. To be
specific, and for easier comparison with the physics literature on the
subject, we will use the notation $\langle\hat{a}\rangle$ to denote
$\omega(a)$, but expectation values could as well be defined using mixed
states or algebraic states.

We introduce a set of basic variables taking real values:

\begin{defi} 
  Given a state on a canonical algebra ${\cal A}$ generated by self-adjoint
  $Q_j$ and $\Pi_k$, in addition to the unit, the {\em basic
    expectation values} are $q_j=\langle \hat{Q}_j\rangle\in{\mathbb R}$ and
  $\pi_k=\langle\hat{\Pi}_k\rangle\in{\mathbb R}$.

  For positive integers $k_i$ and $l_i$ such that $\sum_{i=1}^N(k_i+l_i)\geq
  2$, the {\em moments} of the state are given by
\begin{equation} \label{moments}
 \Delta\left(q_1^{k_1}\cdots q_N^{k_N}\pi_1^{l_1}\cdots \pi_N^{l_N}\right) =
 \langle (\hat{Q}_{1} - q_{1})^{{k_1}} \cdots (\hat{Q}_{N} - q_{N})^{{k_N}}
 (\hat{\Pi}_{1} - \pi_{1})^{{l_1}} \cdots (\hat{\Pi}_{N} - \pi_{N})^{{l_{N}}}
 \rangle_{\mathrm{Weyl}}  
\, , 
 \end{equation}
where the product of operators is Weyl (totally symmetrically) ordered.
\end{defi}

If the state is a Gaussian wave function in the standard Hilbert space on
which ${\cal A}$ can be represented, the moments obey the hierarchy 
\begin{equation} \label{hierarchy}
 \Delta\left(q_1^{k_1}\cdots q_N^{k_N}\pi_1^{l_1}\cdots \pi_N^{l_N}\right)=
 O\left(\hbar^{\frac{1}{2}\sum_n (l_n+k_n)}\right)\,.
\end{equation} 
This property motivates
\begin{defi}
 A state on a canonical algebra ${\cal A}$ is {\em semiclassical} if its
 moments obey the hierarchy (\ref{hierarchy}).
\end{defi}
A semiclassical state is much more general than the Gaussian family,
which has two free parameters per canonical pair of degrees of freedom. A
general semiclassical state, by contrast, allows for infinitely many free
parameters per canonical pair of degrees of freedom.

We will use the semiclassical hierarchy mainly in order to truncate the
infinite-dimensional space of expectation values and moments:
\begin{defi}
  The {\em semiclassical truncation of order $s\geq 2$} of a quantum system
  with canonical algebra ${\cal A}$ is a finite-dimensional manifold ${\cal
    P}_s$ with boundary, determined by global coordinates $q_j$, $\pi_k$ and
  all moments (\ref{moments}) such that $\sum_n(l_n+k_n)\leq s$. Its boundary
  components are obtained from the Cauchy--Schwarz inequality.
\end{defi}
A semiclassical truncation of order $s$ therefore includes variables up to
order $\frac{1}{2}s$ in $\hbar$ when evaluated on a Gaussian state.
Well-known components of the boundary are given by Heisenberg's uncertainty
principle
\begin{equation}
\Delta(q_j^2)\Delta(\pi_k^2)- \Delta(q_j \pi_k)^2 \geq \frac{\hbar^2}{4}
\delta_{jk} \, ,
\end{equation}
but there are higher-order versions relevant for $s>2$.

Basic expectation values and moments are equipped with a Poisson bracket
defined by
\begin{equation}
\{ \langle \hat{A} \rangle, \langle \hat{B} \rangle \}= \frac{1}{i \hbar}
\langle [\hat{A},\hat{B}] \rangle\, , 
\end{equation}
extended to all moments by using linearity and the Leibniz rule. The Poisson
bracket turns any semiclassical truncation into a phase space by ignoring in
$\{\Delta_1,\Delta_2\}$ all terms of order higher than $s$ in moments. This
condition includes the convention that the product of a moment of order $s_1$
and a moment of order $s_2$ is of semiclassical order $s_1+s_2$. Moreover, the
product of a moment of order $s_1$ with $\hbar^{s_2}$ is of order
$s_1+2s_2$. The consistency of this notion of order and the resulting
truncation has been shown in \cite{Counting}. 

In general, the Poisson tensor on a semiclassical truncation is not
invertible, such that there is no natural symplectic structure on a
semiclassical phase space. For instance, for $N=1$ the phase space of a
semiclassical truncation of order $s=1$ is five-dimensional with coordinates
$(q,\pi,\Delta(q^2),\Delta(q\pi),\Delta(\pi^2))$, and cannot be
symplectic. The non-zero basic brackets are
\begin{equation}
 \{q,\pi\} = 1
\end{equation}
and
\begin{equation} \label{qpiDelta}
 \{\Delta(q^2),\Delta(q\pi)\} = 2\Delta(q^2)\quad,\quad
 \{\Delta(q\pi),\Delta(\pi^2)\} = 2\Delta(\pi^2)\quad,\quad
 \{\Delta(q^2),\Delta(\pi^2)\} = 4\Delta(q\pi)\,.
\end{equation}

Quantum dynamics is determined by a Hamiltonian element $H\in{\cal A}$. We
assume that the Hamiltonian element is given by a sum of Weyl-ordered products
of the canonical generators. It defines the quantum Hamiltonian
$H_Q(\langle\cdot\rangle,\Delta)=\langle\hat{H}\rangle_{\langle\cdot\rangle,\Delta}$,
identified as a function of basic expectation values and moments through the
state used in $\langle\hat{H}\rangle$. On a semiclassical truncation of order
$s$, the quantum Hamiltonian leads to the effective Hamiltonian of order $s$,
\begin{eqnarray} \label{Heff}
H_{{\rm eff},s}
&=& \langle H(\hat{Q}_j + (\hat{Q}_j - q_j), \hat{\Pi}_k + (\hat{\Pi}_k -
\pi_k))\rangle \\ 
&=& H(q,\pi)+ \sum \limits_{\sum_n(j_n+k_n)=2}^{s}
 \frac{\partial^n H(q,\pi)}{\partial q_1^{j_1}\cdots \partial
  q_N^{j_N} \partial \pi_1^{k_1}\cdots \partial \pi_N^{k_N}}
\frac{\Delta\left(q_1^{j_1}\cdots q_N^{j_N}\pi_1^{k_1}\cdots
    \pi^{k_N}\right)}{j_1!\cdots j_N! k_1!\cdots k_N!} \, ,\nonumber
\end{eqnarray}
obtained by a formal Taylor expansion in $\hat{Q}_j - q_j$ and $\hat{\Pi}_k -
\pi_k$, where $H(q,\pi)$ is the classical Hamiltonian corresponding to
$H\in{\cal A}$. If the Hamiltonian is a polynomial in basic operators, the
expansion in (\ref{Heff}) is a finite sum and exact, and merely rearranges the
monomial contributions to $\hat{H}$ in terms of central moments.  By
definition of the Poisson bracket from the commutator, Hamiltonian equations
of motion
\begin{eqnarray}
\dot{f}(\langle\cdot\rangle,\Delta) = \{ f(\langle\cdot\rangle,\Delta), H_{{\rm
  eff},s}\}
\end{eqnarray}
generated by an effective Hamiltonian are truncations of Heisenberg's
equations of motion evaluated in a state.

\section{Faithful realizations of semiclassical truncations}

While the Poisson brackets $\{q_j,\pi_k\}=1$,
$\{q_j,\Delta\}=0=\{\pi_k,\Delta\}$ involving basic expectation values are
simple, the brackets between two moments are non-canonical and, in general,
non-linear \cite{EffAc,HigherMoments}:
\begin{eqnarray}\label{MomentBrackets}
\{\Delta(q^b\pi^a),\Delta(q^d\pi^c)\}&=&
a\, d \, \Delta(q^b\pi^{a-1}) \Delta(q^{d-1}\pi^c) - b c \Delta(q^{b-1}\pi^a)
\Delta(q^d\pi^{c-1})\nonumber\\
&&+\sum_{{\rm odd}\;n=1}^M
\left(\frac{i\hbar}{2}\right)^{n-1}
K_{abcd}^{n}\, \Delta(q^{b+d-n}\pi^{a+c-n})
\end{eqnarray}
with $M={\rm min}(a + c, b + d, a + b, c + d)$ and
\begin{equation}
K_{abcd}^n = \sum_{m= 0}^{n} (-1)^m m!(n-m)!
\left(\!\!\begin{array}{c} a\\m
\end{array}\!\!\right)
\left(\!\!\begin{array}{c} b\\n-m
\end{array}\!\!\right)
\left(\!\!\begin{array}{c} c\\n-m
\end{array}\!\!\right)
\left(\!\!\begin{array}{c} d\\m
\end{array}\!\!\right) \,.
\end{equation}
Since only odd $n$ are included in the sum in (\ref{MomentBrackets}), all
coefficients are real. Whenever a term $\Delta(q)$ or $\Delta(\pi)$ appears on
the right, it is understood to be zero, which is consistent with an extension
of (\ref{moments}) to $\sum (k_i+l_i)=1$ because $\langle\hat{a}-a\rangle=0$
for any operator $\hat{a}$. The brackets (\ref{MomentBrackets}) are therefore
linear in moments if and only if $a+b=2$ or $c+d=2$.

We will look for mappings of the moments to new variables such that the
Poisson brackets can be simplified. In particular, we will derive
canonical realizations of semiclassical truncations.
\begin{defi}
  A {\em canonical realization} of an algebra $(C^{\infty}(M),
  \{\cdot,\cdot\})$ on an open submanifold ${\cal U}\subset M$ is a
  homomorphism $(C^{\infty}({\cal U}),\{\cdot,\cdot\})\to (C^{\infty}({\mathbb
    R}^{2p}\times{\mathbb R}^I),\{\cdot,\cdot\}_{\rm can})$ to the algebra of
  functions on the Poisson manifold ${\mathbb R}^{2p+I}$ equipped with the
  canonical Poisson bracket on ${\mathbb R}^{2p}$, while $\{f,C\}_{\rm can}=0$
  for all $f\in C^{\infty}({\mathbb R}^{2p}\times{\mathbb R}^I)$ and $C\in
  {\mathbb R}^I$.

  A canonical realization of $(C^{\infty}(M), \{\cdot,\cdot\})$ is {\em
    faithful} if ${\rm dim}M=2p+I$ and $2p$ is equal to the rank of the
  Poisson tensor on $M$.
\end{defi}
Our examples of $M$ will be given by open submanifolds of the phase space of a
given semiclassical truncation. A closely related concept is that of a
bosonic realization:
\begin{defi}
  A {\em bosonic realization} of an algebra $(C^{\infty}(M), \{\cdot,\cdot\})$
  on an open submanifold ${\cal U}\subset M$ is a homomorphism
  $(C^{\infty}({\cal U}),\{\cdot,\cdot\})\to (C^{\infty}({\mathbb
    C}^p\times{\mathbb R}^I),\{\cdot,\cdot\}_{\rm bos})$ to the algebra of
  functions on the Poisson manifold ${\mathbb C}^p\times{\mathbb R}^I$, where
  ${\mathbb C}$ is equipped with the Poisson bracket $\{z^*,z\}_{\rm bos}=i$,
  while $\{f,C\}_{\rm bos}=0$ for all $f\in C^{\infty}({\mathbb
    C}^p\times{\mathbb R}^I)$ and $C\in {\mathbb R}^I$.

  A bosonic realization of $(C^{\infty}(M), \{\cdot,\cdot\})$ is {\em
    faithful} if ${\rm dim}M=2p+I$ and $2p$ is equal to the rank of the
  Poisson tensor on $M$.
\end{defi}
Pullbacks by the local symplectomorphisms
\begin{equation}
\Phi\colon {\mathbb R}^{2p}\to{\mathbb C}^p, (q_j,p_k)\mapsto
 \left({\textstyle\frac{1}{\sqrt{2}}}(q_l+ip_l)\right)
\end{equation}
define a bijection between canonical realizations and bosonic realizations
which preserves faithfulness. 

We note that the definitions impose reality conditions on the canonical or
bosonic variables. In particular, all $q_j$ and $p_k$ must be real, and a
bosonic pair $(z,z')$ with $\{z',z\}=i$ must be such that $z'=z^*$. 

\subsection{Poisson structure of semiclassical truncations}

Since basic expectation values have canonical Poisson brackets with one
another and zero Poisson brackets with any moment, the non-trivial
task is to construct a canonical realization of the space of moments for a
given semiclassical truncation, at fixed basic expectation values.
 
A canonical realization of a semiclassical truncation of order $s$ induces a
map 
\begin{equation}\label{mapping}
 {\cal X}^{(s)}\colon {\cal U}\subset {\cal P}_s \to {\mathbb
  R}^{2p}\times{\mathbb R}^I, (\Delta)\mapsto (s_{\alpha}, p_{\beta},
U_{\gamma})
\end{equation}
such that the variables $(s_{\alpha}, p_{\beta})$, $\{
s_{\alpha},p_{\beta} \} = \delta_{\alpha \beta}$, can be used as
coordinates on symplectic leaves defined by constant $U_{\gamma}$. The
coordinates $U_{\gamma}$ are therefore local expressions of Casimir functions
of the Poisson manifold \cite{Weinstein}.

A faithful realization requires a bijective map between the moments and
canonical variables. For a single degree of freedom and a semiclassical
truncation of order $s$, the dimension $D$ of the phase space is the number of
moments up to order $s$, or
\begin{equation}
D = \sum_{j=2}^s(j+1)= \frac{1}{2}(s^2 + 3 s - 4) \, .
\end{equation}
Note again that this dimension $D$ may be even or odd, depending on $s$. Even
if $D$ is even, the Poisson tensor is not guaranteed to be invertible.

Every function on a Poisson manifold we are considering can be expressed as
a function of finitely many moments $\Delta_i$ in some ordering.  We
introduce the Poisson tensor
\begin{equation}
\mathbb{P}_{ij}^{(s)}(\Delta) = \{ \Delta_i, \Delta_j \}\,,
\end{equation}
such that the Poisson brackets of the set of coordinates
$\mathcal{X}^{(s)}(\Delta)$ are
\begin{equation}
\{\mathcal{X}^{(s)}_{\alpha}(\Delta), \mathcal{X}^{(s)}_{\beta}(\Delta) \} =
\sum \limits_{i,j=1}^D
\frac{\partial\mathcal{X}^{(s)}_{\alpha}(\Delta)}{\partial \Delta_i} \,
\mathbb{P}_{ij}^{(s)}(\Delta) \,
\frac{\partial\mathcal{X}^{(s)}_{\beta}(\Delta)}{\partial \Delta_j} \,.
\end{equation}
The dimension of the nullspace of the Poisson tensor is equal to the number
of Casimir functions in a neighborhood of a given set of $\Delta_i$.

If the co-rank of the Poisson tensor is equal to $I$, at each point of 
phase space there exist $I$ linearly independent vectors ${\bf w}_{k}$,
$k=1,\ldots,I$ with components $w_{k}^i, i=1,\ldots,D$, such that
\begin{equation}
\sum \limits_{j=1}^D \mathbb{P}_{ij}^{(s)} w_{k}^j = 0, \,\,\, k= 1,\ldots, I \, .
\end{equation}
The vectors ${\bf w}_k=(w_{k}^j)$ are the eigenvectors of the Poisson tensor
with zero eigenvalue. Since this eigenspace has $I$-fold degeneracy, the ${\bf
  w}_{k}$ are not unique if $I>1$. They can be rearranged in linear
combinations with coefficients depending on $\Delta_i$.

Suppose one of the eigenvectors, ${\bf w}_{k}$, can be expressed as
\begin{equation}
w_{k}^{i} = \frac{\partial C_k(\Delta)}{\partial \Delta_{i}}\,.
\end{equation}
Then $C_k(\Delta)$ is a Casimir function which commutes with any function on
the Poisson manifold.  At a given point, each 1-form ${\rm d}C_k$ defines a
smooth submanifold of codimension one in the Poisson manifold through ${\rm
  d}C_k=0$. As the eigenvectors ${\bf w}_k$, and therefore the ${\rm d}C_k$,
are linearly independent, the intersections of all $I$ ($D-1$)-dimensional
submanifolds is a ($D-I$)-dimensional submanifold, called a symplectic
leaf. If we choose local coordinates $(v_1,\ldots,v_{D-I})$ on a symplectic
leaf, we have $(v_1,\cdots,v_{2n},C_1,\cdots,C_I)$ as a coordinate system on
phase space, where $n=\frac{1}{2}(D-I)$. The Poisson tensor in these
coordinates takes the form
\begin{equation} \label{Pij}
\mathbb{P}^{(s)}_{ij} = \left(
\begin{array}{c|c}
\tilde{\mathbb{P}}^{(s)}_{\alpha \beta} & 0  \\
\hline
0 & 0
\end{array}
\right) \, ,
\end{equation}
where $\tilde{\mathbb{P}}^{(s)}_{\alpha \beta} = \{ v_{\alpha},v_{\beta}
\}$ and $\mathrm{det}(\tilde{\mathbb{P}}^{(s)}_{\alpha \beta}) \neq 0$. A
faithful canonical realization provides a map
\begin{equation}
(v_1,\cdots,v_{2n},C_1,\cdots ,C_I) \rightarrow (s_1,\cdots , s_n, p_1,
\cdots , p_n, U_1, \cdots, U_I) 
\end{equation}
of the local coordinates. After applying this map, the Poisson tensor has the
form (\ref{Pij}) with
\begin{equation}
\tilde{\mathbb{P}}_{\alpha\beta}^{(s)} = \left(\begin{array}{cc}
0 & \mathbb{I}_n \\
- \mathbb{I}_n & 0
\end{array}\right) \, .
\end{equation}
Darboux' theorem shows that local canonical coordinates $s_{\alpha}$ and
$p_{\beta}$ exist.

As $\dot{C}_I = \{C_I,H\} = 0$ for any Hamiltonian $H$, motion is always
confined to a symplectic leaf $C_I=\mathrm{const}$. Moreover, the existence of
a Casimir function implies that the Hamiltonian is not unique because
$\{f,H\}=\{f,H+\lambda^IC_I\}$ for any phase-space function $f$ and
$\lambda^I\in{\mathbb R}$.

\subsection{Algebraic structure of second-order semiclassical truncations} \label{s:Alg}

For a system with $N$ classical degrees of freedom, we collectively refer to
$q_j$ and $\pi_k$ as $x_i$, $i=1,\ldots,2N$.  As can be seen from
(\ref{MomentBrackets}) or directly from commutators, the Poisson brackets of
second-order semiclassical truncations are then of the form
\begin{equation} \label{DeltaDelta} 
\left\{\Delta(x_i x_j),\Delta(x_k
    x_l)\right\}=\sum_{m\leq n}f_{ij;kl}^{m n}\Delta(x_mx_n)\,.
\end{equation}
The $\Delta(x_ix_j)$ form an independent set of moments if we require that
$i\leq j$.
 
The brackets are linear and form a Lie algebra with structure constants
\begin{equation}
f_{ij;kl}^{m
  n}=\tau_{ik}\delta_{j}^m\delta_{l}^n+\tau_{il}\delta_{j}^m\delta_{k}^n+
\tau_{jk}\delta_{i}^m\delta_{l}^n+\tau_{jl}\delta_{i}^m\delta_{k}^n\,, 
\end{equation}
using $\tau_{ij}=\left\{x_i,x_j\right\}$. For $\tau_{ij}$, we have the
identity
\begin{equation}
 \sum_j\tau_{ij}\tau_{jk}= \sum_j\{x_i,x_j\}\{x_j,x_k\}=-\delta_{ik}
\end{equation}
because both brackets are non-zero if and only if $x_j$ is canonically
conjugate to both $x_i$ and $x_k$, which implies $x_i=x_k$ for basic
variables.  We note that the $f_{ij;kl}^{mn}$ are manifestly symmetric in the
index pairs $(i,j)$ and $(k,l)$, but not in $(m,n)$.

Instead of summing over restricted
double indices, it is more convenient to symmetrize all of them explicitly, in
particular 
\begin{equation}
 f_{ij;kl}^{(m
   n)}=\frac{1}{2}\left(
\tau_{ik}\delta_{j}^m\delta_{l}^n+\tau_{il}\delta_{j}^m\delta_{k}^n+ 
\tau_{jk}\delta_{i}^m\delta_{l}^n+\tau_{jl}\delta_{i}^m\delta_{k}^n+
\tau_{ik}\delta_{j}^n\delta_{l}^m+\tau_{il}\delta_{j}^n\delta_{k}^m+  
\tau_{jk}\delta_{i}^n\delta_{l}^m+\tau_{jl}\delta_{i}^n\delta_{k}^m\right)\,, 
\end{equation}
and include all $\Delta(x_mx_n)$ in (\ref{DeltaDelta}) using
$\Delta(x_mx_n)=\Delta(x_nx_m)$. Summations over restricted double indices
$(m,n)$ such that $m\leq n$ can then be replaced by two full summations over
$m$ and $n$. For instance,
\begin{equation}
 \left\{\Delta(x_i x_j),\Delta(x_k
    x_l)\right\}=\sum_{m\leq n}f_{ij;kl}^{m
    n}\Delta(x_mx_n)=\sum_{m,n}f_{ij;kl}^{(m n)}\Delta(x_mx_n)\,. 
\end{equation}

\subsubsection{Cartan metric and root vectors}

We compute the Cartan metric
\begin{equation}\label{Cartan}
 g_{ij;kl}=\sum_{m,n,o,p} f_{ij;mn}^{(op)} f_{kl;op}^{(mn)} 
= 4(N+1)\left(\tau_{i l}\tau_{kj}+\tau_{i k}\tau_{lj}\right)\,.
\end{equation}

\begin{lemma}
The Cartan metric (\ref{Cartan}) is non-degenerate.
\end{lemma}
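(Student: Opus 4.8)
The plan is to read $g$ as a symmetric bilinear form on the vector space $W=\mathrm{Sym}^2(\mathbb{R}^{2N})$ of symmetric two-tensors, which is exactly the carrier space of the Lie algebra (\ref{DeltaDelta}) and has dimension $N(2N+1)$. By definition, $g$ is non-degenerate if and only if every symmetric tensor $A^{kl}$ satisfying $\sum_{k,l}g_{ij;kl}A^{kl}=0$ for all $i,j$ vanishes, so I would fix such an $A$ and show $A=0$.

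First I would simplify the contraction using $A^{kl}=A^{lk}$ and $\tau_{ij}=-\tau_{ji}$. Relabelling $k\leftrightarrow l$ in the term $\sum_{k,l}\tau_{il}\tau_{kj}A^{kl}$ of (\ref{Cartan}) shows it equals $\sum_{k,l}\tau_{ik}\tau_{lj}A^{kl}$, i.e.\ the two terms in (\ref{Cartan}) contribute equally, so that $\sum_{k,l}g_{ij;kl}A^{kl}=-8(N+1)\,(TA)_{ij}$, where $T$ is the linear map on two-tensors defined by $(TA)_{ij}=\sum_{k,l}\tau_{ik}\tau_{jl}A_{kl}$, that is, $T=\tau\otimes\tau$. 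A one-line check with $\tau_{ij}=-\tau_{ji}$ shows that $T$ sends symmetric tensors to symmetric tensors, so it restricts to an endomorphism of $W$, and the hypothesis on $A$ becomes simply $TA=0$.

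The key step is the invertibility of $T$. For this I would invoke the identity $\sum_j\tau_{ij}\tau_{jk}=-\delta_{ik}$ derived above, which expresses that $\tau$, the Poisson tensor of the $2N$ basic variables $x_i$, is an invertible antisymmetric matrix with $\tau^{-1}=-\tau$. It gives $(T^2A)_{ij}=\sum_{m,n}\big(\sum_k\tau_{ik}\tau_{km}\big)\big(\sum_l\tau_{jl}\tau_{ln}\big)A_{mn}=\sum_{m,n}(-\delta_{im})(-\delta_{jn})A_{mn}=A_{ij}$, hence $T^2=\mathrm{id}$ and $T$ is its own inverse. Applying $T$ to $TA=0$ then yields $A=0$, which proves that $g$ is non-degenerate; the overall prefactor $-8(N+1)$ is nonzero since $N\ge 1$.

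I do not anticipate a real obstacle here: the only substantive input is the invertibility of $\tau$, which is in hand, and the remainder is careful bookkeeping of the symmetrizations and of the factor $2$ produced by the index relabelling. As a consistency check, non-degeneracy of the Cartan (Killing) form is precisely Cartan's criterion applied to the Lie algebra (\ref{DeltaDelta}); since that algebra is $\mathrm{sp}(2N,\mathbb{R})$, realised as $\mathrm{Sym}^2$ of a symplectic vector space with bracket induced by the symplectic form, and $\mathrm{sp}(2N,\mathbb{R})$ is semisimple, the statement is expected, but the computation above establishes it directly and independently of that identification.
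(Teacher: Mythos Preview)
Your proof is correct and follows essentially the same approach as the paper: both contract a putative null vector (a symmetric two-tensor) with $g$, reduce the condition to a conjugation of that tensor by $\tau$, and then invoke invertibility of $\tau$ to conclude the tensor vanishes. Your packaging via $T=\tau\otimes\tau$ with $T^2=\mathrm{id}$ is a clean variant of the paper's direct matrix argument $\sum_{i,j}\tau_{li}\,\mathrm{Sym}(V^{ij})\,\tau_{jk}=0\Rightarrow\mathrm{Sym}(V)=0$, but the substance is identical.
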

\begin{proof}
  The metric acts on objects of the form
  $V=\sum_{i,j}V^{ij}\Delta(x_i x_j)$ via
\begin{equation}
 g(V_1,V_2)= \sum_{i,j,k,l}g_{ij;kl} V_1^{ij}V_2^{kl}\,.
\end{equation}
For $V$ to be non-zero we need
$\text{Sym}(V^{ij})=\frac{1}{2}(V^{ij}+V^{ji})\neq 0$ because $\Delta(x_i
x_j)=\Delta(x_jx_i)$. Suppose there is a non zero object $V$ in the null space
of $g$, such that $g\left(V,\cdot\right)=0$ or
$\sum_{i,j}V^{ij}g_{ij;kl}=0$. Using (\ref{Cartan}) and rearranging, we find
\begin{equation}\label{con}
 0=8(N+1)\sum_{i,j}\tau_{li}\,\text{Sym}(V^{ij})\,\tau_{jk}\,.
\end{equation}
Because $\tau$ is invertible, (\ref{con}) implies that $V^{ij}$ is
antisymmetric, but then $V=0$. We conclude that $g$ is non-degenerate.
\end{proof}

The algebra of second-order moments is therefore a semi-simple Lie algebra.
We can show that it is actually simple, and identify it, by examining its
Dynkin diagram. We should first find the Cartan subalgebra.
\begin{lemma}
 The adjoint action of any moment of the form $\Delta(q_iq_j)$,
 $\Delta(\pi_i\pi_j)$, or $\Delta(q_k\pi_l)$ with $k\not=l$ is nilpotent.
\end{lemma}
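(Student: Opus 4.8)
The plan is to produce, for each of the three types of moment $X$ in the statement, an integer grading on the Lie algebra of second-order moments taking only the values $0,1,2$, with respect to which $\mathrm{ad}_X$ is strictly lowering; nilpotency of $\mathrm{ad}_X$ (in fact $\mathrm{ad}_X^3=0$) then follows immediately.

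First I would specialize the linear bracket (\ref{DeltaDelta}) to the explicit form
\[
\{\Delta(x_ix_j),\Delta(x_kx_l)\}=\tau_{ik}\Delta(x_jx_l)+\tau_{il}\Delta(x_jx_k)+\tau_{jk}\Delta(x_ix_l)+\tau_{jl}\Delta(x_ix_k)
\]
(with $\Delta(x_mx_n)=\Delta(x_nx_m)$), and use that for the canonical basic variables the only nonzero entries of $\tau$ are $\tau_{q_a\pi_a}=1=-\tau_{\pi_aq_a}$. Reading off what $\mathrm{ad}_X$ does to a basis moment $\Delta(x_ax_b)$, one sees that it always acts by "moving a single index": $\mathrm{ad}_{\Delta(q_iq_j)}$ turns a factor $\pi_i$ into $q_j$ or a factor $\pi_j$ into $q_i$; $\mathrm{ad}_{\Delta(\pi_i\pi_j)}$ turns a factor $q_i$ into $\pi_j$ or a factor $q_j$ into $\pi_i$; and $\mathrm{ad}_{\Delta(q_k\pi_l)}$ turns a factor $\pi_k$ into $\pi_l$ (with sign $+$) or a factor $q_l$ into $q_k$ (with sign $-$). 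In each case, if no admissible factor is present then every term in the bracket carries a vanishing $\tau$, so the whole bracket is zero.

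Now I would pick the grading. For $X=\Delta(q_iq_j)$, grade $\Delta(x_ax_b)$ by its number of momentum factors; the moves above strictly decrease this by one, and it equals zero precisely in the situation where the bracket vanishes. For $X=\Delta(\pi_i\pi_j)$ the mirror grading by the number of position factors works identically. The only case needing a different choice is $X=\Delta(q_k\pi_l)$ with $k\neq l$, where neither of those gradings is lowered; here I would grade $\Delta(x_ax_b)$ by the total number of its factors equal to $q_l$ or to $\pi_k$. Because $k\neq l$, the targets $\pi_l$ and $q_k$ of the two possible moves are neither $q_l$ nor $\pi_k$, so each move again strictly drops the grading by one, down to a minimum of zero where the bracket vanishes. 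This is exactly the point at which the hypothesis $k\neq l$ enters; for $k=l$ the element $\Delta(q_k\pi_k)$ generates a dilation and is semisimple rather than nilpotent, so no such grading can exist.

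To conclude, write the algebra as the vector-space sum $V_0\oplus V_1\oplus V_2$ of the spans of the basis moments of each grading value; the previous step gives $\mathrm{ad}_X(V_g)\subseteq V_{g-1}$ with $V_{-1}=\{0\}$ for the relevant $X$, hence $\mathrm{ad}_X^3=0$. I do not anticipate a genuine obstacle: everything reduces to the bracket formula and elementary bookkeeping, and the only step that requires thought is choosing the combined grading $n_{q_l}+n_{\pi_k}$ in the mixed case and checking that the off-diagonal condition $k\neq l$ makes it strictly decreasing. (As a sanity check, this matches the expectation that, in the quadratic-polynomial realization of the algebra, these three families are precisely the strictly triangular elements of $\mathrm{sp}(2N,{\mathbb R})$.)
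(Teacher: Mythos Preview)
Your proposal is correct. For the families $\Delta(q_iq_j)$ and $\Delta(\pi_i\pi_j)$ your argument is essentially the paper's: both observe that the adjoint action replaces a single $\pi$-factor by a $q$-factor (or vice versa) and hence lowers the count of momentum (resp.\ position) factors, forcing $\mathrm{ad}_X^3=0$.

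For the mixed case $\Delta(q_k\pi_l)$ with $k\neq l$ the approaches diverge in presentation. The paper proceeds by brute force: it lists the moments on which the bracket is nonzero, namely $\Delta(x\pi_k)$ and $\Delta(yq_l)$, and iterates the adjoint action three times through explicit case distinctions until zero is reached. Your grading $n_{q_l}+n_{\pi_k}$ replaces that computation by a single observation---since $k\neq l$, the targets $\pi_l$ and $q_k$ of the two index moves lie outside $\{q_l,\pi_k\}$, so the grading strictly drops. This is cleaner and more uniform with the other two cases, and it makes transparent exactly where the hypothesis $k\neq l$ is consumed. The paper's explicit iteration, on the other hand, has the minor advantage of displaying the intermediate brackets (e.g.\ $\Delta(q_l\pi_l)-\Delta(q_k\pi_k)$), which are informative for the root-space picture developed immediately afterward.
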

\begin{proof}
  The claim is easy to see for $\Delta(q_iq_j)$ and $\Delta(\pi_i\pi_j)$: The
  adjoint action of $\Delta(q_iq_j)$ on a moment $\Delta$ is a sum of moments
  in which any $\pi_k$ that may appear in $\Delta$ is replaced by $q_k$, if
  $k=i$ or $k=j$. After applying this action twice, no $\pi_k$ is left and the
  third application gives zero. Analogous arguments hold for
  $\Delta(\pi_i\pi_j)$.

 For $\Delta(q_k\pi_l)$ with $k\not=l$, the adjoint action is non-zero only on
 moments of the form $\Delta(x\pi_k)$ or $\Delta(yq_l)$, where $x$ and $y$ can
 be any position or momentum component. In the first case, we compute
\begin{eqnarray*}
 \{\Delta(q_k\pi_l),\Delta(x\pi_k)\} &=& \Delta(x\pi_l)+
 \{\pi_l,x\}\Delta(q_k\pi_k)\\ 
 &=& \left\{\begin{array}{cl} \Delta(q_l\pi_l)- \Delta(q_k\pi_k) &\mbox{if
     }x=q_l\\ \Delta(x\pi_l) & \mbox{if }x\not=q_l\end{array}\right.
\end{eqnarray*}
Therefore,
\begin{eqnarray*}
 \{\Delta(q_k\pi_l),\{\Delta(q_k\pi_l),\Delta(x\pi_k)\}\} &=&
 \left\{\begin{array}{cl} 
     -\Delta(q_k\pi_l) &\mbox{if }x=q_l\\ \{q_k,x\}\Delta(\pi_l^2) \mbox{if
     }x\not=q_l\end{array}\right.\\
 &=& \left\{\begin{array}{cl}
     -\Delta(q_k\pi_l) &\mbox{if }x=q_l\\ \Delta(\pi_l^2)& \mbox{if
     }x=\pi_k\\0 &\mbox{otherwise}\end{array}\right.
\end{eqnarray*}
The next adjoint action of $\Delta(q_k\pi_l)$ gives zero, and similarly on
$\Delta(yq_l)$.
\end{proof}
Since nilpotent actions are non-diagonalizable, we construct the Cartan
subalgebra from moments of the form $\Delta(q_i \pi_i)$. Since they
Poisson commute with one another, they span the Cartan subalgebra
\begin{equation}
H=\left\langle\Delta(q_i \pi_i)\right\rangle_{1\leq i\leq N}\,.
\end{equation}
The moments $\Delta(q_i \pi_i)$ are orthogonal to one another and have the same
norm with respect to the Cartan metric.

The entire set of moments forms a Cartan--Weyl basis.  For any $\Delta(q_i
\pi_i)$, the set of basic moments $\Delta(x_kx_l)$ with $k\leq l$ is an
eigenbasis of the adjoint action with eigenvalues $2$ if $x_k=x_l=\pi_i$, $1$ if
$x_l=\pi_i$ and $q_i\not=x_k\not=\pi_i$, $-1$ if $x_k=q_i$ and
$q_i\not=x_l\not=\pi_i$, $-2$ if $x_k=x_l=q_i$, and zero otherwise. The
eigenvectors with eigenvalues $\pm 2$ have eigenvalue $0$ with any other
$\Delta(q_i\pi_i)$, while the eigenvectors with eigenvalues $\pm 1$ are shared
by two moments of the form $\Delta(q_i \pi_i)$. The root system is therefore
given by all vectors with only two non-zero components of opposite sign and
absolute value one, and vectors with a single non-zero component equal to $\pm
2$.  A suitable subset of eigenmoments with the smallest possible positive
eigenvalues for the adjoint action of all $\Delta(q_i\pi_i)$ gives the simple
root vectors
\begin{equation}
\left\{\Delta(q_2\pi_1),\Delta(q_3\pi_2),\ldots,\Delta(q_{N}\pi_{N-1}),
\Delta(\pi_N^2)\right\}\,,   
\end{equation}
with simple roots
\begin{equation}
 \left(\begin{array}{c}1\\-1\\0\\0\\\vdots\\0\\0\\0\end{array}\right)
\quad,\quad   
 \left(\begin{array}{c}0\\1\\-1\\0\\\vdots\\0\\0\\0\end{array}\right)
\quad,\;\ldots\;,\quad
 \left(\begin{array}{c}0\\0\\0\\0\\\vdots\\0\\1\\-1\end{array}\right)
\quad,\quad 
 \left(\begin{array}{c}0\\0\\0\\0\\\vdots\\0\\0\\2\end{array}\right)\,.
\end{equation}
The resulting Dynkin diagram, shown in Fig.~\ref{f:Dynkin}, belongs to ${\rm
  sp}(2N,{\mathbb R})$.

\begin{figure}
\begin{center}
\includegraphics[scale=0.5]{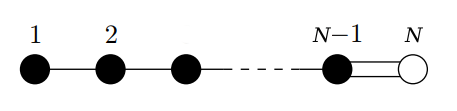}
\caption{The Dynkin diagram for a second-order semiclassical truncation. We
  adopt the convention that the filled circles correspond to shorter roots and
  the empty circles correspond to longer roots.  \label{f:Dynkin}} 
\end{center}
\end{figure}

The Casimir functions of ${\rm sp}(2N,{\mathbb R})$ can therefore be thought
of as approximate constants of motion in quantum mechanics: At the second
semiclassical order, the Hamiltonian is a function of basic expectation values
and second-order moments, and the ${\rm sp}(2N,{\mathbb R})$ Casimir functions
commute with any such function.  These constants of motion can be written as
\begin{equation}
U_{2m}\propto \tr{\left[\left(\tau \Delta\right)^{2m}\right]}\,\, , \,\,m\leq N
\end{equation}
where $\Delta$ is a matrix with components $\Delta_{ij}=\Delta(x_ix_j)$, and
$\tau_{ij}=\{x_i,x_j\}$ as before. There is one approximate constant of motion
per classical degree of freedom.

\subsubsection{Example of ${\rm sp}(4,{\mathbb R}$)}

For two classical degrees of freedom, we show the Cartan metric
ordering the moments as
\begin{equation}
 \left\{\Delta(\pi_1^2),\Delta(\pi_1
   q_1),\Delta(q_1^2),\Delta(\pi_2^2),\Delta(\pi_2 
   q_2),\Delta(q_2^2),\Delta(\pi_1 \pi_2),\Delta(\pi_1 q_2),\Delta(\pi_2
   q_1),\Delta(q_1 q_2)\right\}\,.
\end{equation}
The result,
\begin{equation}
g=\left(
\begin{array}{cccccccccc}
    0 & 0& -24& 0 & 0 & 0 & 0& 0& 0 & 0 \\
    0 & 12& 0& 0 & 0 & 0 & 0& 0& 0 & 0 \\
    -24 & 0& 0& 0 & 0 & 0 & 0& 0& 0 & 0 \\
    0 & 0& 0& 0 & 0 & -24 & 0& 0& 0 & 0 \\
    0 & 0& 0& 0 & 12 & 0 & 0& 0& 0 & 0 \\
    0 & 0& 0& -24 & 0 & 0 & 0& 0& 0 & 0 \\
    0 & 0& 0& 0 & 0 & 0 & 0& 0& 0 & -12 \\
    0 & 0& 0& 0 & 0 & 0 & 0& 0& 12 & 0 \\
    0 & 0& 0& 0 & 0 & 0 & & 12& 0 & 0 \\
    0 & 0& 0& 0 & 0 & 0 & -12& 0& 0 & 0
\end{array}\right)\,,
\end{equation}
is easily seen to be non-degenerate. The Cartan subalgebra is
\begin{equation}
H=\left\langle\Delta(q_1 \pi_1),\Delta(q_2 \pi_2)\right\rangle\,,
\end{equation}
and the simple root vectors
\begin{equation}
\left\{\Delta(q_2\pi_1),\Delta(\pi_2^2)\right\}
\end{equation}
imply simple roots
\begin{equation}
\alpha_1= \left(\begin{array}{c} 1\\-1 \end{array}\right)\quad,\quad
\alpha_2= \left(\begin{array}{c} 0\\2 \end{array}\right)
\end{equation}
corresponding to the Cartan matrix
\begin{equation}
K=\left(\begin{array}{cc} 2 & -1 \\
    -2 & 2 \end{array}\right)
\end{equation}
of ${\rm sp}(4,{\mathbb R})$ (or $C_2$).

\subsection{Examples}

We present standard examples of faithful realizations before we
proceed with the general theory.

\subsubsection{The Lie algebra su(2)}

The Poisson bracket for su(2) with generators $S_i$, $i=1,2,3$, is
given by
\begin{equation}
\left\{S_i,S_j\right\}=\sum_{k=1}^3\epsilon_{ijk}S_k \, .
\end{equation}
It is well known that $S^2=\sum_{i=1}^3S_i^2$ is a Casimir function of this
algebra. The task is to find a pair of functions of the generators that are
canonically conjugate with respect to the original Poisson tensor. These
variables can be defined implicitly by
\begin{equation}
S_x=\sqrt{S^2-S_z^2}\,\,\mathrm{cos}(\phi)\quad,\quad
S_y=\sqrt{S^2-S_z^2}\,\,\mathrm{sin}(\phi)\,,
\end{equation}
such that $\left\{\phi,S_z \right\}=1$. Solving for $\phi$ and inserting it
into the Poisson bracket, we indeed have
\begin{equation}
\left\{\phi,S_z\right\}=\left\{\mathrm{arctan}
  \left(S_y/S_x\right),S_z\right\}= \frac{\partial
  \arctan(S_y/S_x)}{\partial S_x} \{S_x,S_z\}+  \frac{\partial
  \arctan(S_y/S_x)}{\partial S_y} \{S_y,S_z\} =1 \, .
\end{equation}

\subsubsection{The Lie algebra su(1,1)}

The Lie algebra ${\rm su}(1,1)$ is defined by the relations
\begin{equation} \label{K}
[\mathcal{K}_0,\mathcal{K}_1]=-\mathcal{K}_2\quad,\quad
[\mathcal{K}_1,\mathcal{K}_2]=\mathcal{K}_0\quad,\quad
[\mathcal{K}_0,\mathcal{K}_2]=\mathcal{K}_1 \, .
\end{equation}
For this bracket, a faithful canonical realization is given by
\begin{equation}\label{K0Can}
\mathcal{K}_0=k+\frac{1}{2}\left(s^2+p_s^2\right)\quad,\quad
\mathcal{K}_1=\frac{s}{2}\sqrt{4k +s^2+p_s^2}\quad,\quad
\mathcal{K}_2=\frac{p_s}{2}\sqrt{4k +s^2+p_s^2}\,,
\end{equation}
where $K_1^2 + K_2^2 - K_0^2 = -k^2$ is the Casimir function and $s$ and $p_s$
are canonically conjugate variables. 

\subsubsection{The Lie algebra ${\rm sp}(2,{\mathbb R})$}

The Lie algebra ${\rm sp}(2,{\mathbb R})$ can be expressed as the set of
matrices of the form $\left(\begin{array}{cc}c&a\\b&-c\end{array}\right)$,
with generators
\begin{equation}
 A=\left(\begin{array}{cc}0&1\\0&0\end{array}\right)\quad,\quad B=
 \left(\begin{array}{cc}0&0\\1&0\end{array}\right) \quad,\quad
 C=\left(\begin{array}{cc}1&0\\0&-1\end{array}\right)
\end{equation}
and relations
\begin{equation}
 [A,B]=C\quad,\quad [A,C]=-2A\quad,\quad [B,C]=2B\,.
\end{equation}
Over the complex numbers, this Lie algebra
is isomorphic to ${\rm su}(1,1)$ via
\begin{equation}
 A=K_2+iK_1\quad,\quad B=K_2-iK_1\quad,\quad C=2iK_0\,.
\end{equation}
The canonical realization (\ref{K0Can}) can therefore be mapped
to this case:
\begin{equation} \label{ABC}
 A=\frac{1}{2}(p_s+is)\sqrt{4k+s^2+p_s^2}\quad,\quad
 B=\frac{1}{2}(p_s-is)\sqrt{4k+s^2+p_s^2}\quad,\quad C=i(2k+s^2+p_s^2)\,.
\end{equation}
However, because ${\rm sp}(2,{\mathbb R})$ and ${\rm su}(1,1)$ are different
real forms, these generators are not real. The generators (\ref{ABC})
therefore do not present a suitable canonical realization for our purposes.

Similarly, using $b=2^{-1/2}(s+ip_s)$, we obtain generators
\begin{eqnarray}  \label{ABCBos}
A =ib^*\sqrt{b^* b + 2k}\quad, \quad B=-ib\sqrt{b^* b +
  2k}\quad,\quad C = 2 i(b^* b + k)
\end{eqnarray}
of Holstein--Primakoff type \cite{HP} in which $A$ and
$B=A^*$ can be quantized to raising and lowering operators. However, these
generators are not real either, and do not present a suitable bosonic
realization. 

\subsubsection{Second-order semiclassical truncation for a single pair of
  classical degrees of freedom}

The constructions used in \cite{GaussianDyn,QHDTunneling} can be interpreted
as a faithful canonical realization
\begin{equation} \label{MomentCanReal}
 \Delta(q^2)=s^2\quad,\quad \Delta(q\pi)=sp_s\quad,\quad \Delta(\pi^2)=p_s^2+
 \frac{U}{s^2}
\end{equation}
of a semiclassical truncation with $N=1$, $s=2$, and Casimir function $U$. 

The mapping 
\begin{equation} \label{ABCDelta}
 A=-\frac{1}{2}\Delta(\pi^2)\quad,\quad B=\frac{1}{2}\Delta(q^2)\quad,\quad
 C=\Delta(q\pi) 
\end{equation}
generates an isomorphism to ${\rm sp}(2,{\mathbb R})$, giving a simple example
of the results of Section~\ref{s:Alg}, and a corresponding faithful canonical
realization of ${\rm sp}(2,{\mathbb R})$. If we use the canonical realization
(\ref{K0Can}) of ${\rm su}(1,1)$, on the other hand, we obtain complex
expressions for the moments and therefore violate the reality conditions
imposed on faithful canonical realizations.

Using (\ref{ABCDelta}), the canonical realization (\ref{MomentCanReal}) can be
related to (\ref{ABCBos}) if we define
\begin{equation}
b' = \frac{-\sqrt{2}iA}{\sqrt{-iC+2k}}=
\frac{i}{\sqrt{2}}\frac{p_s^2+U/s^2}{\sqrt{\sqrt{U}-i s p_s}}\quad,\quad 
b = \frac{\sqrt{2}iB}{\sqrt{-iC+2k}}=\frac{i}{\sqrt{2}}
\frac{s^2}{\sqrt{\sqrt{U}-i s p_s}} 
\end{equation}
with $U=4k^2$, such that $\{b',b\}=i$. However, reality conditions are again
violated because $b'\not=b^*$.

\subsubsection{Non-faithful bosonic realization of ${\rm sp}(2N,{\mathbb R})$}

The Lie algebra ${\rm sp}(2N,{\mathbb R})$ can be written with $N(2N+1)$
generators $A_{ij}$ ($i\leq j$), $B_{ij}$ ($i\leq j$) and $C_{ij}$ where
$i,j=1,\ldots,N$ and relations \cite{DynCollective}
\begin{eqnarray} \label{ABCRel}
 &&[A_{ij},A_{i'j'}]=0=[B_{ij},B_{i'j'}]\\
 && [B_{ij},A_{i'j'}] = C_{j'j}\delta_{ii'}+ C_{i'j}\delta_{ij'}+
 C_{j'i}\delta_{ji'}+ C_{ii'}\delta_{jj'}\\
 && [C_{ij},A_{i'j'}]= A_{ij'}\delta_{ji'}+A_{ii'}\delta_{jj'}\\
 && [C_{ij},B_{i'j'}]= -B_{jj'}\delta_{ii'}- B_{ji'}\delta_{ij'}\\
 && [C_{ij},C_{i'j'}]= C_{ij'}\delta_{i'j}- C_{i'j}\delta_{ij'}\,.
\end{eqnarray}
It has a bosonic realization \cite{Collective,DynCollective,Bosonsp4,BosSymp}
\begin{equation} \label{NonFaithful}
 A_{ij} = \sum_{\alpha=1}^n b_{i\alpha}^*b_{j\alpha}^*\quad,\quad
 B_{ij} = \sum_{\alpha=1}^n b_{i\alpha}b_{j\alpha} \quad,\quad
 C_{ij} = \frac{1}{2}\sum_{\alpha=1}^n \left(b^*_{i\alpha}b_{j\alpha}+
   b_{j\alpha}b^*_{i\alpha}\right)
\end{equation}
for every integer $n\geq 1$, with $nN$ boson variables $b_{i\alpha}$ (implying
$2nN$ degrees of freedom). 

For our purposes, this realization violates reality conditions. Moreover, it
is not faithful: Since $2N+1$ is odd, the number of degrees of freedom cannot
match the dimension $N(2N+1)$ of ${\rm sp}(2N,{\mathbb R})$, and since ${\rm
  sp}(2N,{\mathbb R})$ has rank $N$, it has $N$ Casimirs. For a faithful
bosonic realization, one therefore needs $N^2$ boson variables $b_{i\alpha}$
(that is, $n=N$) and $N$ Casimir variables. Finding an explicit realization of
this form has proven to be difficult even for ${\rm sp}(4,{\mathbb R})$. For
instance, possible expressions have been given up to solving complicated
partial differential equations \cite{Bosonsp4} or diagonalizing large matrices
\cite{BosSymp}. In the next section, we will solve this problem for the
analogous question of finding a faithful canonical realization of a
second-order semiclassical truncation with two classical degrees of freedom,
which is algebraically equivalent to ${\rm sp}(4,{\mathbb R})$.

\section{Constructing Casimir--Darboux coordinates}

A partially constructive proof of Darboux' theorem for symplectic manifolds is
presented in \cite{Arnold}: Given a symplectic manifold $(M,\omega)$, the
following steps demonstrate the existence of Darboux coordinates $(q_j,\pi_k)$
in a neighborhood ${\cal U}\subset M$ around a given point $x\in M$, such that
$\omega=\sum_j {\rm d}q_j\wedge {\rm d}\pi_j$.  We first choose some function
on $M$, calling it $q_1$, such that ${\rm d}q_1\not=0$ at $x$. Its Hamiltonian
vector field $X_{q_1}$ is then non-zero and generates a non-trivial flow
$F_{q_1}(t)=\exp(tX_{q_1})$ in a neighborhood of $x$. Choosing a hypersurface
transverse to $X_{q_1}$, we can endow the whole neighborhood with a pair of
coordinates given by $q_1$ and $\pi_1=-t$, defined by the parameter $t$ of the
Hamiltonian flow such that $t=0$ on the hypersurface. These two coordinates
are canonically conjugate because
\begin{equation}
 \{q_1,\pi_1\}= X_{q_1}t=\frac{\partial}{\partial t}t=1\,.
\end{equation}

We then move on to the hypersurface defined by $q_1=0=\pi_1$, apply the
previous steps, and iterate until we have the required number of coordinates
$q_j$ and $\pi_k$ defined on a family of hypersurfaces of decreasing
dimension. Starting with the last hypersurface of dimension two, we
iteratively transport the coordinates into a neighborhood within the next
higher hypersurface by declaring that they take constant values on all lines
of the flows $F_{q_i}(s)F_{\pi_i}(t)$, if $q_i$ and $\pi_j$ have already
been transported in this way. The proof concludes by showing that the
coordinates transported to the neighborhood ${\cal U}$ of $x$ in $M$ are
indeed canonical.

The steps used to prove Darboux' theorem for symplectic manifolds can be
simplified and extended to a systematic procedure to derive Casimir--Darboux
coordinates on Poisson manifolds. We keep the first step, but instead of using
hypersurfaces of constant canonical coordinates we construct hypersurfaces
which are Poisson orthogonal to the already constructed canonical pairs. This
modification eliminates the need to transport coordinates from hypersurfaces
to the full manifold. We first illustrate the method for the second-order
semiclassical truncation of a single pair of classical degrees of freedom.

\subsection{Canonical realization for a single pair of degrees of freedom
  at second order}

The Poisson brackets of our non-canonical coordinates $\Delta(q^2)$,
$\Delta(q\pi)$ and $\Delta(\pi^2)$ are given in (\ref{qpiDelta}):
\begin{equation}
  \{\Delta(q^2),\Delta(q\pi)\} = 2\Delta(q^2)\quad,\quad
 \{\Delta(q\pi),\Delta(\pi^2)\} = 2\Delta(\pi^2)\quad,\quad
 \{\Delta(q^2),\Delta(\pi^2)\} = 4\Delta(q\pi)\,.
\end{equation}
As our first canonical coordinate we choose
$s=\sqrt{\Delta(q^2)}$. Identifying the (negative) parameter along its
Hamiltonian flow with the new momentum $p_s$, we have the differential
equations
\begin{eqnarray}
 \frac{\partial\Delta(q^2)}{\partial p_s} &=&
 -\{\Delta(q^2),\sqrt{\Delta(q^2)}\}=0\\ 
 \frac{\partial\Delta(q \pi)}{\partial p_s} &=&
 -\{\Delta(q\pi),\sqrt{\Delta(q^2)}\}=  \sqrt{\Delta(q^2)}=s \label{Diff1}\\
 \frac{\partial\Delta(\pi^2)}{\partial p_s} &=&
 -\{\Delta(\pi^2),\sqrt{\Delta(q^2)}\}=2\frac{\Delta(q
   \pi)}{\sqrt{\Delta(q^2)}}= 2\frac{\Delta(q \pi)}{s}\,.  \label{Diff2}
\end{eqnarray}
Since $s$ is held constant in these equations, we can first solve
(\ref{Diff1}) by a simple integration,
\begin{equation}
 \Delta(q \pi)=s p_s +f_1(s)\,,
\end{equation}
insert the result in (\ref{Diff2}) and integrate once more:
\begin{equation}
 \Delta(\pi^2)=p_s^2+2\frac{f_1(s)}{s} p_s +f_2(s)\,.
\end{equation}
Computing $\{\Delta(q \pi),\Delta(\pi^2)\}$ using the canonical nature of the
variables $s$ and $p_s$, and requiring that it equal $2\Delta(\pi^2)$ implies
two equations:
\begin{equation}
 \frac{{\rm d}f_1}{{\rm d}s}=\frac{f_1}{s} \quad,\quad \frac{{\rm d}f_2}{{\rm
     d}s} = 2\frac{f_1}{s^2} \frac{{\rm d}f_1}{{\rm d}s} - 2\frac{f_2}{s} \,.
\end{equation}
They are solved by
\begin{equation}
 f_1(s)=U_2s \quad,\quad f_2(s)=\frac{U_1}{s^2}+U_2^2
\end{equation}
with constants $U_1$ and $U_2$. We can eliminate $U_2$ by a canonical
transformation replacing $p_s$ with $p_s+U_2$. The constant $U_1$ is the
Casimir coordinate. The resulting moments in terms of Casimir--Darboux
variables are
\begin{equation}
 \Delta(q^2) = s^2\quad,\quad
\Delta(q \pi) = s  p_s\quad,\quad
\Delta(\pi^2) = p_s^2+\frac{U_1}{s^2}
\end{equation}
as in (\ref{MomentCanReal}) or \cite{GaussianDyn,QHDTunneling}.  The Casimir
coordinate $U_1$ can be interpreted as the left-hand side of Heisenberg's
uncertainty relation,
\begin{equation} \label{U1}
 \Delta(q^2) \Delta(\pi^2) - \Delta(q \pi)^2 = U_1 \geq \frac{\hbar^2}{4}\,,
\end{equation}
which is a constant of motion at second semiclassical order.

\subsection{Poisson tensors of rank greater than two}

If we have a Poisson tensor of rank greater than two, we have to iterate the
procedure used in our example in order to find additional canonical pairs. In
general, it may be difficult to solve some of the differential equations
explicitly. 

Instead of using general solutions and eliminating surplus parameters through
canonical transformations, in practice it is more useful to make suitable
choices for functions such as $f_1$ and $f_2$ in the preceding example. There
are wrong choices in the sense that the procedure may terminate before the
required number of coordinates has been found, in which case one obtains a
non-faithful canonical realization. Usually, it is not difficult to see which
choices lead to a loss of degrees of freedom.

In order to iterate the procedure, we use the following method related to the
notion of Dirac observables in canonical relativistic systems
\cite{DirQuant,BergmannTime,PartialCompleteObs}. Having found a canonical pair
$(s,p_s)$ on a (sub)manifold of dimension $d$, we construct $d-2$ independent
functions $f_i$ such that $\{f_i,s\}=0=\{f_i,p_s\}$ for all $i$. These
functions are then Dirac observables with respect to $s$ and $p_s$. The
construction of Dirac observables is, in general, a very difficult task, and
in fact presents one of the main problems of canonical quantum gravity. Here,
however, the structure of already-constructed canonical coordinates helps to
make the construction of suitable $f_i$ feasible. In particular, the free
functions that remain after constructing $s$ and $p_s$, such as $f_1$ and
$f_2$ in the example, are, by construction, independent of $s$, and therefore
already fulfill $\{f_i,p_s\}=0$. 

Only a single set of conditions, $\{f_i,s\}=0$, then remains to be implemented
by suitable combinations of the original $f_i$, which can be done by
eliminating integration parameters in the flow $F_s(t)$. For instance, had we
not already known that $U_1$ in (\ref{U1}) is a Casimir function, we could
have derived it as follows: The flow generated by $s^2=\Delta(q^2)$ on the
remaining moments is determined by the differential equations
\begin{equation}
 \frac{{\rm d}\Delta(q\pi)}{{\rm d}t} = -2\Delta(q^2)
 \quad,\quad \frac{{\rm d}\Delta(\pi^2)}{{\rm d}t} = -4\Delta(q\pi)\,.
\end{equation}
The first equation implies that $\Delta(q\pi)[t]=-2\Delta(q^2)t+d$ with
$t$-independent $d$. Inserting this solution in the second equation, we find
$\Delta(\pi^2)[t]=4\Delta(q^2)t^2-4dt+e$ with another constant $e$. We now
eliminate $t$ by inserting $t=\frac{1}{2}(d-\Delta(q\pi)[t])/s^2$ in
$\Delta(\pi^2)[t]$: 
\begin{equation}
 \Delta(\pi^2)[t]= \frac{\Delta(q\pi)[t]^2}{\Delta(q^2)}-
 3\frac{d^2}{\Delta(q^2)}+e\,.
\end{equation}
Therefore, $U_1=\Delta(q^2)\Delta(\pi^2)[t]-\Delta(q\pi)[t]^2= -3d^2+es^2$ is
independent of $t$, which implies ${\rm d} U_1/{\rm
  d}t=\{U_1,\Delta(q^2)\}=0$, and $U_1$ is a Dirac observable with respect to
$\Delta(q^2)$ which can be used as a coordinate Poisson orthogonal to $s$.

The Poisson bracket of two Dirac observables is also a Dirac observable.
(This property may be useful for calculating further Dirac observables once
more than two have been found.)  Given a complete set of Dirac observables,
they form coordinates on a Poisson manifold, and we can compute their Poisson
brackets from their expressions in terms of the original variables. On this
new Poisson manifold, we proceed as in the first step, and then iterate. The
procedure terminates when we reach the full dimension, in which case the
Poisson manifold is symplectic, or when we obtain a complete set of Poisson
commuting Dirac observables. The commuting Dirac observables are the Casimir
functions. Because all coordinates constructed in this way are functions of
the original variables (the moments in our case of interest), there is no need
to transport coordinates to successive hypersurfaces.

\subsection{Second-order canonical realization for two classical degrees of
  freedom}

A non-trivial example of our general procedure is given by the second-order
semiclassical truncation of a system with two pairs of classical degrees of
freedom, $(q_1,\pi_1)$ and $(q_2,\pi_2)$. We obtain ten moments: two
fluctuations and one covariance for each pair, as well as four
cross-covariance such as $\Delta(q_1q_2)$. The rank of the resulting Poisson
tensor is eight, so that we should construct four canonical pairs and two
Casimir functions.

Since we already discussed the case of a single canonical pair, we can speed
up the first step and construct two canonical pairs at the same time by
defining $s_1=\sqrt{\Delta(q_1^2)}$ and $s_2=\sqrt{\Delta(q_2^2)}$.  Their
canonical momenta can be generated as in the case of a single degree of
freedom, but analogs of the functions $f_i$ could now depend on all the
remaining canonical variables: We have
\begin{equation} \label{Deltaqp1}
 \Delta(q_1\pi_1) = s_1p_1+f_{q_1\pi_1} \quad,\quad \Delta(\pi_1^2) =
 p_1^2+2\frac{p_1}{s_1}f_{q_1\pi_1}+ f_{q_1\pi_1}^2+\frac{f_{\pi_1^2}}{s_1^2}
\end{equation}
and
\begin{equation}\label{Deltaqp2}
 \Delta(q_2\pi_2) = s_2p_2+f_{q_2\pi_2} \quad,\quad \Delta(\pi_2^2) =
 p_2^2+2\frac{p_2}{s_2}f_{q_2\pi_2}+ f_{q_2\pi_2}^2+\frac{f_{\pi_2^2}}{s_2^2}
\end{equation}
with four functions $f_{q_1\pi_1}$, $f_{\pi_1^2}$, $f_{q_2\pi_2}$ and
$f_{\pi_2^2}$ independent of $s_1$, $p_1$, $s_2$ and $p_2$.

We now have to find spaces which are Poisson orthogonal to
$(s_1,p_1,s_2,p_2)$, or functions of the moments which Poisson commute with
all four canonical coordinates. If we choose $f_{q_1\pi_1}=0=f_{q_2\pi_2}$,
this condition is equivalent to having moments which Poisson commute with
$\Delta(q_1^2)$, $\Delta(q_1p_1)$, $\Delta(q_2^2)$ and $\Delta(q_2p_2)$. Two
such functions are
\begin{equation}
 f_{\pi_1^2}=s_1^2\Delta(\pi_1^2)-s_1^2p_1^2=
 \Delta(q_1^2)\Delta(\pi_1^2)-\Delta(q_1\pi_1)^2 =:f_1
\end{equation}
and
\begin{equation}
 f_{\pi_2^2}=s_2^2\Delta(\pi_2^2)-s_2^2p_2^2=
 \Delta(q_2^2)\Delta(\pi_2^2)-\Delta(q_2\pi_2)^2=:f_2
\end{equation}
obtained simply by solving (\ref{Deltaqp1}) and (\ref{Deltaqp2}) for
$f_{\pi_1^2}$ and $f_{\pi_2^2}$.  After computing the Poisson brackets between
all the cross-covariances and $\Delta(q_1^2)=s_1^2$,
$\Delta(q_1\pi_1)=s_1p_1$, $\Delta(q_2^2)=s_2^2$ and
$\Delta(q_2\pi_2)=s_2p_2$, we can construct a complete set of other Poisson
commuting functions by integrating flow equations generated by
$\Delta(q_1^2)$, $\Delta(q_1\pi_1)$, $\Delta(q_2^2)$ and $\Delta(q_2\pi_2)$.
The resulting combinations are
\begin{eqnarray}
 f_3 &=& \Delta(q_1\pi_2)\Delta(q_2\pi_1)-\Delta(q_1q_2)\Delta(\pi_1\pi_2)\\
 f_4 &=&
 \Delta(q_1^2)\frac{\Delta(q_2\pi_1)}{\Delta(q_1q_2)}-\Delta(q_1\pi_1)\\
 f_5 &=&
 \Delta(q_2^2)\frac{\Delta(q_1\pi_2)}{\Delta(q_1q_2)}-\Delta(q_2\pi_2)\\
 f_6 &=& \frac{\Delta(q_1^2)\Delta(q_2^2)}{\Delta(q_1q_2)^2}\,,
\end{eqnarray}
as can be checked explicitly.  The Poisson brackets between these six
functions are closed, so that we can iterate the procedure.

We start the next step by defining $s_3=f_6$, which is the inverse of the
squared correlation between the two particle positions. Its flow equations
impose conditions on derivatives of functions Poisson-commuting with $p_3$,
which can again be integrated. Solving some of the integrals, we obtain $p_3$
as a function of the $f_i$ and $s_3$, explicitly
\begin{equation} \label{p3}
 p_3=\frac{f_4+f_5}{4s_3(1-s_3)}\,.
\end{equation}
Moreover, the four combinations
\begin{eqnarray}
 g_1 &=& f_1+\frac{(f_4+f_5)^2}{4(1-f_6)}+
 \frac{1}{2}\frac{(f_4+f_5)(f_4-f_5)}{1-f_6}\\
 g_2 &=& f_2+\frac{(f_4+f_5)^2}{4(1-f_6)}-
 \frac{1}{2}\frac{(f_4+f_5)(f_4-f_5)}{1-f_6}\\
 g_3 &=& f_3+\frac{(f_4+f_5)^2}{4(1-f_6)}\\
g_4 &=& \frac{1}{2}(f_4-f_5)
\end{eqnarray}
Poisson commute with $s_3$ and $p_3$, as can again be checked explicitly. It
turns out that
\begin{equation} \label{Casimirg}
 g_1+g_2-2g_3=U_1
\end{equation}
is the quadratic Casimir of the full moment system. Using $U_1$, we have three
remaining variables, which can conveniently be chosen to be $g_1\pm g_2$ and
$g_4$. Their mutual Poisson brackets are again closed. 

The next step of the procedure leads to the combinations
\begin{eqnarray}
 h_1 &=& \frac{g_4}{\sqrt{s_3-1  }}\\
h_2 &=& (g_1-g_2)\sqrt{\frac{s_3-1}{s_3}}\\
h_3 &=& \frac{(1-s_3)(g_1+g_2)+s_3 U_1+2(1+s_3)(1-s_3)^{-1}g_4^2}{\sqrt{s_3}}
\end{eqnarray}
Poisson-commuting with $s_3$ and $p_3$, in addition to $U_1$.  We choose
$p_4=h_1$ as our final canonical momentum, such that invariance under its flow
implies 
\begin{eqnarray}
h_2&=&A(p_4) \cos(s_4)\\
h_3&=&A(p_4) \sin(s_4)
\end{eqnarray}
with some function $A(p_4)$. From the remaining Poisson brackets of $h_i$, it
follows that
\begin{equation}
A(p_4) \frac{{\rm d} A(p_4)}{{\rm d} p_4}=-8 p_4 U_1+32 p_4^3\,.
\end{equation}
The general solution of this equation is
\begin{equation}
A(p_4)=\sqrt{U_2-8p_4^2 U_1+16 p_4^4}
\end{equation}
with a constant of integration $U_2$ which can be interpreted as the second
Casimir. (At this point, it could be any function of the quadratic and quartic
Casimirs).

To summarize, we express the original moments in terms of Casimir--Darboux
variables.  For moments of the first classical pair of degrees of freedom, we
find
\begin{eqnarray}
\Delta(q_1^2)&=&s_1^2\quad,\quad
\Delta(q_1 \pi_1)= s_1 p_1 \label{TwoMoments1}\\
\Delta(\pi_1^2)&=&p_1^2+\frac{\Phi(s_3,p_3,s_4,p_4)}{s_1^2}
\end{eqnarray}
with
\begin{eqnarray}
\Phi(s_3,p_3,s_4,p_4)&=&-\frac{s_3+1}{s_3-1}p_4^2-4 s_3 \sqrt{s_3-1}p_3 p_4+4
s_3^2\left(s_3-1\right)p_3^2+\frac{1}{2}\frac{s_3}{s_3-1} U_1\\
&&-\frac{1}{2}\frac{\sqrt{s_3}}{s_3-1}\sqrt{U_2-8p_4^2 U_1+16
  p_4^4}\left(\sqrt{s_3-1}\cos{(s_4)}+\sin{(s_4)}\right)\,. \nonumber 
\end{eqnarray}
For moments of the second classical pair of degrees of freedom,
\begin{eqnarray}
\Delta(q_2^2)&=&s_2^2\quad,\quad
\Delta(q_2 \pi_2)= s_2 p_2\\
\Delta(\pi_2^2)&=&p_2^2+\frac{\Gamma(s_3,p_3,s_4,p_4)}{s_2^2}
\end{eqnarray}
with
\begin{eqnarray}
\Gamma(s_3,p_3,s_4,p_4)&=&-\frac{s_3+1}{s_3-1}p_4^2+4 s_3 \sqrt{s_3-1}p_3
p_4+4 s_3^2\left(s_3-1\right)p_3^2
+\frac{1}{2}\frac{s_3}{s_3-1} U_1\\
&&-\frac{1}{2}\frac{\sqrt{s_3}}{s_3-1}\sqrt{U_2-8p_4^2 U_1+16
  p_4^4}\left(-\sqrt{s_3-1}\cos{(s_4)}+\sin{(s_4)}\right)\,. \nonumber
\end{eqnarray}
Finally, we have
\begin{eqnarray}
\Delta(\pi_1 \pi_2)&=&\frac{p_1
  p_2}{\sqrt{s_3}}+\sqrt{\frac{s_3-1}{s_3}}
\left(\frac{p_2}{s_1}-\frac{p_1}{s_2}\right)p_4\\ 
&&-2
\sqrt{s_3}\left(s_3-1\right)
\left(\frac{p_1}{s_2}+\frac{p_2}{s_1}\right)p_3+\frac{\left(3
    s_3-1\right)}{s_1 s_2 \sqrt{s_3}\left(s_3-1\right)}p_4^2\nonumber\\ 
&&-4\frac{\left(s_3-1\right)s_3^{3/2}}{s_1 s_2}p_3^2-\frac{\sqrt{s_3}}{2 s_1
  s_2 \left(s_3-1\right)}U_1\nonumber\\ 
&&+\frac{s_3}{2 s_1 s_2 \left(s_3-1\right)}\sin{(s_4)}\sqrt{U_2-8p_4^2 U_1+16
  p_4^4}\nonumber\\ 
\Delta(q_1\pi_2)&=&\frac{p_2
  s_1}{\sqrt{s_3}}-\sqrt{\frac{s_3-1}{s_3}}\frac{s_1}{s_2}p_4-2
\left(s_3-1\right)\sqrt{s_3}\frac{s_1}{s_2}p_3\\ 
\Delta(q_2\pi_1)&=&\frac{p_1
  s_2}{\sqrt{s_3}}+\sqrt{\frac{s_3-1}{s_3}}\frac{s_2}{s_1}p_4-2
\left(s_3-1\right)\sqrt{s_3}\frac{s_2}{s_1}p_3\\ 
\Delta(q_1 q_2)&=& \frac{s_1 s_2}{\sqrt{s_3}} \label{TwoMoments2}
\end{eqnarray}	
for the cross-covariances.

\subsection{Third-order semiclassical truncation for single pair of
  degrees of freedom}

Third-order moments are subject to linear Poisson brackets within a
third-order truncation. In particular, the Poisson bracket of any pair of
third-order moments is zero within this truncation, and we have linear
brackets between second-order and third-order moments, such as
\begin{equation}
 \{\Delta(q^2),\Delta(q^2\pi)\}= 2\Delta(q^3)\quad,\quad
 \{\Delta(q^2),\Delta(q\pi^2)\}= 4\Delta(q^2\pi)\quad,\quad
 \{\Delta(q^2),\Delta(\pi^3)\}=6\Delta(q\pi^2)
\end{equation}
and so on. Thanks to the truncation, the brackets still define a linear Lie
algebra, but it is not semisimple because the third-order moments span an
Abelian ideal. This seven-dimensional Lie algebra is the semidirect product
${\rm sp}(2,{\mathbb R})\ltimes{\mathbb R}^4$ where ${\rm sp}(2,{\mathbb R})$,
spanned by the second-order moments, acts on ${\mathbb R}^4$, spanned by the
third-order moments, according to 
\begin{eqnarray}
 A&=&-\frac{1}{2}\Delta(\pi^2)=
 \left(\begin{array}{cccc}
     0&0&0&0\\3&0&0&0\\0&2&0&0\\0&0&1&0 \end{array}\right)\,, \\
B&=&\frac{1}{2}\Delta(q^2)=
\left(\begin{array}{cccc}
    0&1&0&0\\0&0&2&0\\0&0&0&3\\0&0&0&0 \end{array}\right)\,, \\
C&=&\Delta(q\pi)=
\left(\begin{array}{cccc} -3&0&0&0\\0&-1&0&0\\0&0&1&0\\0&0&0&3
\end{array}\right)
\end{eqnarray}
using (\ref{ABCDelta}). Computing the Casimir
\begin{equation}
K=-\frac{1}{2}(AB+BA)-\frac{1}{4}C^2=-\frac{15}{4}{\mathbb
  I}=-\frac{3}{2}\left(\frac{3}{2}+1\right){\mathbb I}\,,
\end{equation} 
this action is recognized as the spin-$3/2$ representation of ${\rm
  sp}(2,{\mathbb R})$.

Guided by our second-order examples, we make the choice
\begin{eqnarray}
\Delta(q^2)&=&s_1^2\\
\Delta(q \pi)&=& s_1 \, p_{1} 
\end{eqnarray}
as the first step in the introduction of canonical coordinates.
Suitable variables on the hypersurface Poisson orthogonal to
$(s_1,p_{1})$ are
\begin{eqnarray*}
\label{f}
f_1&=&\Delta(q^2)\Delta(\pi^2)-\Delta(q\pi)^2\\
f_2&=&\Delta(q^2)\frac{\Delta(q^2\pi)}{\Delta(q^3)}-\Delta(q\pi)\\
f_3&=&\frac{\Delta(q^2)^2}{\Delta(q^3)^2}\left(\Delta(q^2\pi)^2-
  \Delta(q\pi^2)\Delta(q^3) \right)\\
f_4 &=& 2 \Delta(q\pi)+\Delta(q^2)\frac{\Delta(q^3)\Delta(\pi^3)-
  \Delta(q\pi^2)\Delta(q^2\pi)}{\Delta(q^2\pi)^2-\Delta(q\pi^2)\Delta(q^3)} \, .
\end{eqnarray*}
The dimension of the Poisson manifold at third order is $D=7$, while the rank
of the Poisson tensor is six. We therefore expect three degrees of freedom and
one Casimir function. One additional coordinate Poisson commuting with
$(s_1,p_1)$ is needed to have seven independent variables.  Since the Poisson
brackets of $f_i$ are closed, the last variable Poisson commuting with
$(s_1,p_{1})$ has to be the Casimir function, which by ansatz can be found to
be
\begin{eqnarray}
U_1&=&4\left(\Delta(q\pi^2)^2-\Delta(q^2\pi)\Delta(\pi^3)\right)
\left(\Delta(q^2\pi)^2-\Delta(q^3)\Delta(q\pi^2)\right)\\
&-&\left(\Delta(q^2\pi)\Delta(q\pi^2)-\Delta(q^3)\Delta(\pi^3)\right)^2 \, .
\end{eqnarray}

To initiate the next step, we choose
\begin{equation}
s_2=f_3
\end{equation}
and integrate its flow equations. The resulting expressions tell us that
\begin{equation}
p_2=\frac{6 f_2+f_4}{16 s_2}\,,
\end{equation}
while
\begin{equation}
g_1=f_1+\frac{\left(6 f_2+f_4\right)^2}{16}\quad,\quad
g_2=-\frac{1}{2}f_2-\frac{1}{4}f_4 
\end{equation}
Poisson commute with $s_2$ but not with $p_2$.  After a further transformation
of variables, we obtain the remaining canonical pair
\begin{eqnarray}
s_3&=&\frac{g_2}{\sqrt{s_2}}\\
p_3&=&-\frac{2 g_1-7 s_2+10 p_3^2 s_2}{6 \sqrt{s_2}(-1+4 p_3^2)} \, ,
\end{eqnarray}
as can be checked directly.

The resulting faithful canonical realization is given by the second-order
moments 
\begin{eqnarray}
\Delta(\pi^2)&=&p_1^2+\frac{f_1(s_2,p_2,s_3,p_3)}{s_1^2}\\
\Delta(q\pi)&=& s_1 p_1\\
\Delta(q^2)&=&s_1^2
\end{eqnarray}
where
\begin{equation}
  f_1(s_2,p_2,s_3,p_3) = -3 \sqrt{s_2}\left(4
    s_3^2-1\right)p_3+\frac{1}{2}\left(7 -10
    s_3^2\right)s_2-16 s_2^2 p_2^2\,,
\end{equation}
and third-order moments
\begin{eqnarray}
\Delta(\pi^3)&=&
\frac{1}{\sqrt{s_2}s_1^3}\Phi(s_i,p_j)\left(\frac{U_1}{16 
    s_2 s_3^2-4 s_2}\right)^{1/4}\\ 
\Delta(q\pi^2)&=&\frac{1}{s_1 \sqrt{s_2}}\left(p_1 s_1
  +\left(s_3-1\right)\sqrt{s_2}+4 s_2 p_2\right)\\ 
&&\times\left(p_1 s_1 +\left(s_3+1\right)\sqrt{s_2}+4 s_2
  p_2\right)\left(\frac{U_1}{16 s_2 s_3^2-4 s_2}\right)^{1/4}\nonumber\\ 
\Delta(q^2\pi)&=&\frac{1}{\sqrt{s_2}}\left(p_1 s_1^2+s_1\left(p_3 \sqrt{s_2}+4 s_2
    p_2\right)\right)\left(\frac{U_1}{16 s_2 s_3^2-4 s_2}\right)^{1/4}\\ 
\Delta(q^3)&=&\frac{s_1^3}{\sqrt{s_2}}\left(\frac{U_1}{16 s_2 s_3^2-4
    s_2}\right)^{1/4}
\end{eqnarray}
with
\begin{eqnarray}
\Phi(s_i,p_j)&=&p_1^3 s_1^3+3 p_1^2 s_1^2 \sqrt{s_2}s_3
+ 3 p_1 s_1 s_2 \left(s_3^2+4 s_1p_1 p_2-1\right)+64 p_2^3 s_2^3\\
&&+s_2^{3/2}s_3\left(s_3^2+24 s_1p_1 p_2 -7\right)+48 p_2^2
s_2^{5/2}s_3
+12 p_2 s_2^2\left(s_3^2+4 s_1p_1 p_2-1\right)\,.\nonumber
\end{eqnarray}

\subsection{Momentum dependence}

In \cite{QHDTunneling}, the moments are quadratic in the new momentum
$p_s$. This property is useful because it implies an effective Hamiltonian
(\ref{Heff}) with standard kinetic term, quadratic in the classical momentum
$\pi$ (the expectation value) and the new momentum $p_s$ related to
$\Delta(\pi^2)$:
\begin{eqnarray}
 \langle\hat{H}\rangle &=& \frac{\langle\hat{\pi}^2\rangle}{2m}+V(\hat{q})=
 \frac{\pi^2+\Delta(\pi^2)}{2m}+ V(q)+ \frac{1}{2}V''(q) \Delta(q^2)+\cdots
 \nonumber \\
 &=&
 \frac{\pi^2}{2m}+ \frac{p_s^2}{2m}+ \frac{U}{2m}+
 V(q)+\frac{1}{2}V''(q)s^2+\cdots
\end{eqnarray}
The corresponding property for a bosonic realization implies that generators
of a Lie algebra have some terms bilinear in the boson variables. (However,
bosonic realizations corresponding to canonical realizations of moment
algebras cannot be completely bilinear, owing to Casimir terms such as
$U/s^2$.) Our third-order realization for a single classical degree of freedom
is similar in that $\Delta(\pi^2)$ is quadratic in the new momenta, altough
with $s$-dependent coefficients.

Unlike the example of a single pair of degrees of freedom, the moments for two
pairs of degrees of freedom, given so far, are not quadratic in the new
momenta. In fact, we can prove by ansatz that, for a second-order
semiclassical truncation for two classical degrees of freedom, there is no
faithful representation quadratic in momenta with $s$-independent
coefficients. The Poisson tensor has rank eight, so that we are looking for
four canonical pairs $(s_j,p_i)$ and two Casimir functions.

We write
\begin{eqnarray}
 \Delta(\pi_1^2) &=& p_1^2+p_3^2+ F_1(s_i)
 p_1+F_2(s_i)p_3+F(s_i)\\
  \Delta(\pi_2^2) &=& p_2^2+p_4^2+ G_1(s_i)
 p_2+G_2(s_i)p_4+G(s_i)\\
 \Delta(\pi_1\pi_2) &=& p_1p_2+p_3p_4+ H_1(s_i)p_1+H_2(s_i)p_2+H_3(s_i)p_3+
 H_4(s_i)p_4+ H_5(s_i)
\end{eqnarray}
and choose
\begin{equation}
 \Delta(q_1^2)=s_1^2+s_3^2\quad,\quad \Delta(q_2^2)=s_2^2+s_4^2\quad,\quad
 \Delta(q_1q_2)=s_1s_2+s_3s_4\,. 
\end{equation}
A realization of the entire algebra can be generated by taking Poisson
brackets: We can compute
\begin{equation}
 \Delta(\pi_1\pi_2)=
 \frac{1}{4}\left\{\left\{\Delta(q_1q_2),\Delta(\pi_2^2)\right\}, 
\Delta(\pi_1^2)\right\}
\end{equation}
and, given this moment,
\begin{equation}
 \Delta(q_1\pi_2)= \frac{1}{2}\{\Delta(q_1^2),\Delta(\pi_1\pi_2)\}\quad,\quad 
\Delta(q_2\pi_1)= \frac{1}{2}\{\Delta(q_2^2),\Delta(\pi_1\pi_2)\}\,.
\end{equation}
Finally, once we know these three moments, we compute
\begin{equation}
 \Delta(q_1\pi_1)+\Delta(q_2\pi_2)=
 \{\Delta(q_1q_2),\Delta(\pi_1\pi_2)\}\quad,\quad
 -\Delta(q_1\pi_1)+\Delta(q_2\pi_2)= \{\Delta(q_1\pi_2),\Delta(q_2\pi_1)\}
\end{equation}
from which $\Delta(q_1\pi_1)$ and $\Delta(q_2\pi_2)$ follow from linear
combinations. If $F_1=F_2=F_3=0$, $G_1=G_2=G_3=0$, and
$H_1=H_2=H_3=H_4=H_5=0$, we have a non-faithful realization because there are
no Casimir variables. We therefore have to find suitable functions depending
on two additional variables, $U_1$ and $U_2$, such that the required Poisson
brackets are realized.

Evaluating all Poisson brackets for consistency conditions, such as
$\{\Delta(\pi_1^2),\Delta(\pi_2^2)\}=0$, we find the following mapping:
\begin{eqnarray}
\Delta(q_1^2) &=& s_1^2+s_3^2\\
\Delta(q_1\pi_1) &=& s_1p_1+s_3p_3+
\frac{1}{2}s_1s_2U_1\left(\frac{1}{s_2^2}-\frac{1}{s_1^2}\right) +
\frac{1}{2}s_3s_4U_2\left(\frac{1}{s_4^2}-\frac{1}{s_3^2}\right)\\
\Delta(\pi_1^2) &=& p_1^2+p_3^2+
p_1s_2U_1\left(\frac{1}{s_2^2}-\frac{1}{s_1^2}\right)+
p_3s_4U_2\left(\frac{1}{s_4^2}-\frac{1}{s_3^2}\right)\\
&&+
\frac{1}{4}s_2^2U_1^2\left(\frac{1}{s_2^2}-\frac{1}{s_1^2}\right)^2+
\frac{1}{4}s_4^2U_2^2 \left(\frac{1}{s_4^2}-\frac{1}{s_3^2}\right)^2 \nonumber
\end{eqnarray}
for the first classical degree of freedom,
\begin{eqnarray}
 \Delta(q_2^2) &=& s_2^2+s_4^2\\
\Delta(q_2\pi_2) &=&
s_2p_2+s_4p_4+\frac{1}{2}s_1s_2U_1\left(\frac{1}{s_1^2}-\frac{1}{s_2^2}\right)+
\frac{1}{2}s_3s_4U_2\left(\frac{1}{s_3^2}-\frac{1}{s_4^2}\right)\\
\Delta(\pi_2^2) &=& p_2^2+p_4^2+
p_2s_1U_1\left(\frac{1}{s_1^2}-\frac{1}{s_2^2}\right)
+p_4s_3U_2\left(\frac{1}{s_3^2}-\frac{1}{s_4^2}\right)\\
&&+ \frac{1}{4}s_1^2U_1^2
\left(\frac{1}{s_1^2}-\frac{1}{s_2^2}\right)^2+ \frac{1}{4}s_3^2U_2^2
\left(\frac{1}{s_3^2}-\frac{1}{s_4^2}\right)^2\nonumber
\end{eqnarray}
for the second classical degree of freedom, and
\begin{eqnarray}
\Delta(q_1q_2) &=& s_1s_2+s_3s_4\\
\Delta(q_1\pi_2) &=&
s_1p_2+s_3p_4+\frac{1}{2}s_1^2U_1\left(\frac{1}{s_1^2}-\frac{1}{s_2^2}\right)+
\frac{1}{2}s_3^2U_2 \left(\frac{1}{s_3^2}-\frac{1}{s_4^2}\right)\\
\Delta(q_2\pi_1) &=&
s_2p_1+s_4p_3+\frac{1}{2}s_2^2U_1\left(\frac{1}{s_2^2}-\frac{1}{s_1^2}\right)+ 
\frac{1}{2}s_4^2U_2 \left(\frac{1}{s_4^2}-\frac{1}{s_3^2}\right)\\
\Delta(\pi_1\pi_2) &=& p_1p_2+p_3p_4+ \frac{1}{2}p_1s_1U_1
\left(\frac{1}{s_1^2}-\frac{1}{s_2^2}\right)+ \frac{1}{2}p_2s_2U_1
\left(\frac{1}{s_2^2}-\frac{1}{s_1^2}\right)\\
&&+
\frac{1}{2}p_3s_3U_2\left(\frac{1}{s_3^2}-\frac{1}{s_4^2}\right)+
\frac{1}{2}p_4s_4U_2 \left(\frac{1}{s_4^2}-\frac{1}{s_3^2}\right)\nonumber\\
&&-
\frac{1}{4}s_1s_2U_1^2
\left(\frac{1}{s_2^2}-\frac{1}{s_1^2}\right)^2-\frac{1}{4}s_3s_4U_2^2
\left(\frac{1}{s_4^2}-\frac{1}{s_3^2}\right)^2\nonumber
\end{eqnarray}
for the cross-covariances. 

If the two free parameters $U_1$ and $U_2$ were
independent Casimir functions, we would have a faithful canonical realization.
However, the rank of the Jacobian of the transformation from $(s_i,p_j,U_I)$
to the moments can be seen to equal seven, and therefore the realization is
not faithful.  Moreover, the quadratic Casimir of the algebra,
\begin{equation}\label{casimir2}
C_2=\tr{\left(((\tau \Delta)^2)\right)}\,,
\end{equation}
can be computed explicitly and does not equal a function of $U_1$ and $U_2$
--- it depends on the coordinates as well. If the
map were faithful, we would have
\begin{equation}
\frac{\partial C_2}{\partial s_i}=\frac{\partial C_2}{\partial p_j}=0\,.
\end{equation}

Finally, we note that the canonical
transformation
\begin{eqnarray*}
P_1&=& p_1+\frac{1}{2}s_2
U_1\left(\frac{1}{s_2^2}-\frac{1}{s_1^2}\right)\quad,\quad 
P_2= p_2+\frac{1}{2}s_1 U_1\left(\frac{1}{s_1^2}-\frac{1}{s_2^2}\right)\\
P_3&=& p_3+\frac{1}{2}s_4
U_2\left(\frac{1}{s_4^2}-\frac{1}{s_3^2}\right)\quad,\quad 
P_4= p_4+\frac{1}{2}s_3 U_2\left(\frac{1}{s_3^2}-\frac{1}{s_4^2}\right)
\end{eqnarray*}
and $S_i=s_i$ maps our realization to the non-faithful
\begin{eqnarray*}
&&\Delta(q_1^2) = S_1^2+S_3^2\quad,\quad
\Delta(q_1\pi_1) = S_1P_1+S_3P_3\quad,\quad
\Delta(\pi_1^2) = P_1^2+P_3^2\\
&& \Delta(q_2^2) = S_2^2+S_4^2\quad,\quad
\Delta(q_2\pi_2) = S_2P_2+S_4P_4\quad,\quad
\Delta(\pi_2^2) = P_2^2+P_4^2\\
&&\Delta(q_1q_2) = S_1S_2+S_3S_4\quad,\quad
\Delta(q_1\pi_2) = S_1P_2+S_3P_4\\
&&\Delta(q_2\pi_1) =S_2P_1+S_4P_3\quad,\quad
\Delta(\pi_1\pi_2) = P_1P_2+P_3P_4\,,
\end{eqnarray*}
in which there are no free parameters that could play the role of Casimir
functions.  The only possibilities are therefore realizations non-quadratic in
momenta, or with non-standard, $s$-dependent kinetic terms. None of these
options can lead to a bilinear bosonic realization.

\subsection{Realizations of ${\rm sp}(2n,{\mathbb R})$}

The isomorphism between second-order semiclassical truncations and ${\rm
  sp}(2n,{\mathbb R})$ implies that faithful bosonic realizations of ${\rm
  sp}(4,{\mathbb R})$ cannot be bilinear in the boson variables. This result
underlines some of the difficulties in finding such realizations pointed out
in \cite{Bosonsp4,BosSymp}.  Given the generators $A_{ij}$ ($i\leq j$),
$B_{ij}$ ($i\leq j$) and $C_{ij}$, $i,j=1,\ldots,N$, of ${\rm sp}(2N,{\mathbb
  R})$ with relations (\ref{ABCRel}), it is easy to see that an explicit
isomorphism between ${\rm sp}(2N,{\mathbb R})$ and the second-order
semiclassical truncation with $N$ classical degrees of freedom is given by
\begin{equation}
 A_{ij} = \Delta(\pi_i\pi_j)\quad,\quad B_{ij}= \Delta(q_iq_j)\quad,
\quad C_{ij}=\Delta(q_i\pi_j)\,.
\end{equation}

In particular, for ${\rm sp}(4,{\mathbb R})$ we obtain a realization from
(\ref{TwoMoments1})--(\ref{TwoMoments2}) with four bosonic variables
$b_1=\frac{1}{\sqrt{2}}(s_1+ip_1)$, $b_2=\frac{1}{\sqrt{2}}(s_2+ip_2)$,
$b_3=\frac{1}{\sqrt{2}}(s_3+ip_3)$ and $b_4=\frac{1}{\sqrt{2}}(s_4+ip_4)$, in
addition to two Casimir variables $U_1$ and $U_2$. We do not reproduce here all
generators obtained by substituting bosonic variables in
(\ref{TwoMoments1})--(\ref{TwoMoments2}), but note that the resulting
expressions are rather different from the non-faithful form
(\ref{NonFaithful}). Even the moments that are bilinear in bosonic variables,
such as $B_{11}=s_1^2=\frac{1}{2}(b_1+b_1^*)^2$ or $C_{11}=s_1p_1=\frac{1}{2}i
\left((b_1^*)^2-b_1^2\right)$, depend on different combinations of the
$b_i$. These changes are required to maintain the reality conditions implied
by a bosonic realization. Moreover, our realization brings in the two Casimir
variables $U_1$ and $U_2$ in a way that requires a non-bilinear realization.

\section*{Acknowledgements}

This work was supported in part by NSF grant PHY-1607414.


\end{document}